\newcommand{\Zset}{\mathbb{Z}}
\newtheorem{thm}{Theorem}
\newtheorem{lem}{Lemma}
\newtheorem{prop}{Proposition}
\newenvironment{keyword}{\par{\noindent\bf Keywords:}}
\begin{document}

\title{Approximating the shortest path problem with scenarios}

\author[1]{Adam Kasperski}
\author[1]{Pawe{\l} Zieli\'nski\footnote{Corresponding author}}

\affil[1]{
Wroc{\l}aw  University of Science and Technology, Wroc{\l}aw, Poland\\
            \texttt{\{adam.kasperski,pawel.zielinski\}@pwr.edu.pl}}

\date{}

\maketitle

\begin{abstract}
This paper discusses the shortest path problem in a general directed
graph with $n$ nodes and $K$ cost scenarios (objectives). In order to choose a solution, the min-max criterion is applied. 
The min-max version of the problem is hard to approximate within 
$\Omega(\log^{1-\epsilon} K)$ for any $\epsilon>0$
unless NP\,$\subseteq \text{DTIME}(n^{\text{polylog} \,n})$  even for arc series-parallel graphs and
 within  $\Omega(\log n/\log\log n)$ unless NP\,$\subseteq \text{ZPTIME}(n^{\log\log n})$ for 
 acyclic graphs.
 The best approximation algorithm for the min-max shortest path problem in general graphs, known to date, has an approximation ratio of~$K$.  In this paper, 
an  $\widetilde{O}(\sqrt{n})$ flow LP-based approximation algorithm for min-max shortest path in general graphs
 is constructed. It is also shown that the approximation ratio obtained is close to an integrality gap of the corresponding flow LP relaxation.
 \end{abstract}

\begin{keyword}
  Combinatorial optimization;
  Approximation algorithm;
  Robust optimization;
  The shortest path problem;
  Uncertainty
  \end{keyword}

\section{Introduction}

In a \emph{combinatorial optimization problem}, denoted by~$\mathcal{P}$, we are given a finite set of \emph{elements} $E=\{e_1,\dots, e_m\}$ and a set of \emph{feasible solutions} $\Phi\subseteq 2^E$. In a deterministic case, each element $e\in E$ has some nonnegative integral cost~$c_e$, and we seek a feasible solution $X\in \Phi$ which minimizes a linear cost function $F(X)=\sum_{e\in X} c_e$. The above formulation encompasses a large class of problems. In this paper, we focus on  
a fundamental \emph{network problem}, where $E$ is the set of arcs of a given graph $G=(V,E)$, $|V|=n$,
and $\Phi$ contains the subsets of arcs 
forming $s$-$t$ paths in $G$. We thus consider the \emph{shortest path
problem} in a given graph $G$. We briefly denote it by~\textsc{SP}.

Let \emph{scenario set} $\mathcal{U}=\{\pmb{\xi}_1,\dots,\pmb{\xi}_K\}$ contain $K$ distinct cost scenarios, where
\emph{scenario} is a realization of the element costs,
 $\pmb{\xi}=(c_e^{\pmb{\xi}})_{e\in E}$ for $\pmb{\xi}\in\mathcal{U}\subset \Zset^{|E|}_{+}$.
  We  distinguish two cases, namely the \emph{bounded case}, when $K=O(1)$ 
  and the \emph{unbounded case}, when $K$ is a part of the input.
  The latter one is discussed in this paper. The  cost of a given solution~$X\in \Phi$ depends on scenario~$\pmb{\xi}$ and 
will be denoted by
$F(X,\pmb{\xi})=\sum_{e \in X} c_e^{\pmb{\xi}}$. In order to aggregate the cost vector 
$\pmb{F}(X,\mathcal{U})=(F(X,\pmb{\xi}_1),\dots,F(X,\pmb{\xi}_K))$ we use the maximum criterion, that is
the $\infty$-norm, $\lVert \pmb{F}(X,\mathcal{U})\rVert _{\infty}$.
We consider the following min-max version of~$\mathcal{P}$:
\begin{equation} 
\textsc{Min-Max}~\mathcal{P}:\; \min_{X\in \Phi}  \max_{\pmb{\xi}\in\mathcal{U}} F(X,\pmb{\xi})=
\min_{X\in \Phi} \lVert \pmb{F}(X,\mathcal{U})\rVert _{\infty}.
\label{minmaxp}
\end{equation}

 The \textsc{Min-Max}~$\mathcal{P}$ problem can be seen as a \emph{multi-objective problem}
 or a \emph{problem with extra constraints} or a \emph{$K$-budgeted one}
 (see, e.g.,~\cite{AFK02,GRSZ14,ER05,ER00, UT94,PY00}) 
 with the $\infty$-norm as an aggregation criterion. Each scenario defines a linear objective function, an extra constraint, or a budget constraint.
 In one interpretation, a scenario set $\mathcal{U}$ models the uncertainty of the element costs, and we seek a solution that hedges against the worst possible realization of the costs. This is a typical problem considered in
 \emph{robust optimization} (see, e.g.,~\cite{ABV09,BN09, KY97}).
 The \textsc{Min-Max}~$\mathcal{P}$
 is also  a special case of \emph{recoverable robust}~$\mathcal{P}$, in which
 a complete solution is chosen in the first stage, and  then, after a scenario from~$\mathcal{U}$ reveals,
 limited recovery actions are allowed in the second stage~\cite{B12,LLMS09}.
 The discrete scenario set $\mathcal{U}$ can be constructed from any probability distribution by performing a sampling (simulation)~\cite{KW94,RW91,W89}.  Obviously, the larger the set, the better the estimation of the uncertainty. So, the size of $\mathcal{U}$ in practical problems can be very large.
 This advocates for constructing  algorithms
that are applicable in the presence of uncertainty in the definition of the instance.

Unfortunately, the min-max versions of basic network optimization problems,  such as the minimum spanning tree,
the minimum assignment, the minimum $s$-$t$-cut, 
 are NP-hard, even if $K=2$~(see~\cite{ABV09,KZ16b} for surveys). 
In particular, this is also the case for the min-max version of the shortest path problem~\cite{YY98}.
When $K$ is a constant, the  \textsc{Min-Max~SP} problem admits 
a fully polynomial-time approximate scheme (FPTAS)~\cite{ABV07}.
 When $K$ is unbounded, the problem is strongly NP-hard and hard to approximate 
 within $\Omega(\log^{1-\epsilon} K)$ for any $\epsilon>0$
  unless NP\,$\subseteq \text{DTIME}(n^{\text{polylog} \,n})$~\cite{KZ09}, even for graphs with a 
 very simple structure, namely,  for \emph{arc series-parallel graphs} (see Section~\ref{sdsp} for the definition 
 of arc series-parallel graphs). 
 Recently, this bound has been slightly strengthened in~\cite{MOT2024,MOT2024a} for 
 acyclic graphs, namely under the assumption that NP\,$\not\subseteq \text{ZPTIME}(n^{\log\log n})$,
 the  \textsc{Min-Max~SP} problem is hard to approximate within 
 $\Omega(\log n/\log\log n)$.
 
 The best approximation algorithm for \textsc{Min-Max~SP} in general graphs, known to date,  has an approximation ratio of~$K$. It is based on a simple observation that an optimal solution for the average costs $\hat{c}_e=\frac{1}{K}\sum_{\pmb{\xi}\in \mathcal{U}} c_e^{\pmb{\xi}}$, $e\in E$, is at most $K$ times worse than the optimum (see~\cite{ABV09}).
In~\cite{BCFM17}, the authors claim that \textsc{Min-Max~SP} can be approximated within~$O(\log K)$, but
this claim turned out to be false (see~\cite{LXZ2024a,MOT2024a} for counterexamples).
Recently,  a randomized  $O(\log n \log K)$-approximation 
algorithm for \textsc{Min-Max~SP} in general digraphs, whose running time is quasi-polynomial,
has been designed~\cite{LXZ2024,LXZ2024a}.
The quasi-polynomial time of the approximation algorithm can be improved to polynomial for 
graphs with the bounded treewidth. This class of graphs includes the arc series-parallel graphs as a special case.
In~\cite{MOT2024,MOT2024a}, the authors study the $p$-norm shortest path problem of the form $\min_{X\in \Phi} \lVert \pmb{F}(X,\mathcal{U})\rVert _{p}$, where $p\in \Zset_{\geq 1}$. 
They proposed a randomized   $O(c p\log^{1-1/p}n)$-approximation algorithm for this problem 
 in general graphs, where~$c$ is a small constant, which runs in quasi-polynomial time.
 For arc series-parallel graphs, the quasi-polynomial running time of the above algorithm can be improved to polynomial time.
 Note that choosing $p=\lceil \log K \rceil$ in  the 
 $O(c p\log^{1-1/p}n)$-approximation algorithm, we get a randomized, quasi-polynomial time $O(c  \log n \log K)$-approximation algorithm
 for  \textsc{Min-Max~SP} in general graphs. This result is similar to the one in~\cite{LXZ2024,LXZ2024a}.

\textbf{Our results.} In this paper, we design a simple $\widetilde{O}(\sqrt{n})\footnote{This notation ignores logarithmic factors.}$-approximation algorithm (more precisely,
$O(\sqrt{n\log K/\log\log K})$-approximation one) for the min-max shortest path problem in general graphs. 
Contrary to the approximation algorithms proposed in~\cite{MOT2024,MOT2024a,LXZ2024a}, our algorithm is deterministic and runs in polynomial time.
It is based on an appropriate rounding of a solution to the flow LP relaxation of  \textsc{Min-Max~SP}. We will show that the approximation ratio obtained is very close to an integrality gap of this LP relaxation, which is at least  $\Omega(\sqrt{n})$,
even for arc series-parallel graphs.

This paper is organized as follows. In Section~\ref{slp}, we recall an LP formulation of \textsc{Min-Max}~$\mathcal{P}$, which leads to an LP relaxation for this problem. We also recall a formulation of the min-max version of the
 \emph{representatives selection problem}~\cite{DW13,DK12}. Our approximation algorithm, constructed in 
  Section~\ref{sdsp} will use the flow LP relaxation and some known approximation results on this selection problem. 
 In Section~\ref{sdsp}, we construct a deterministic $O(\sqrt{n\log K/\log\log K})$-approximation algorithm for \textsc{Min-Max~SP}
 in general graphs. 
 We also show that an integrality gap of the  flow LP relaxation is at least $\Omega(\sqrt n)$,
even for arc series-parallel graphs.

\section{LP relaxation and representatives selection problem}
\label{slp}

The min-max problem~(\ref{minmaxp})
can be alternatively stated as the following integer program:
\begin{align}
		OPT=\min  & \;C & \label{mipminmax0}\\
		\text{s.t. }	&  \sum_{e\in E} c_e^{\pmb{\xi}}x_e\leq C & \forall \pmb{\xi}\in \mathcal{U}, \label{mipminmax1}\\
		&  \pmb{x}\in \mathcal{X}, &\label{mipminmax2}\\
		&  \pmb{x}\in \{0,1\}^{|E|}, \label{mipminmax3}&
\end{align}
where (\ref{mipminmax2}) and (\ref{mipminmax3}) describe the set of feasible solutions~$\Phi$,
 $\mathcal{X}$ is given by a system of linear constraints involving~$\pmb{x}=(x_{e})_{e\in E}$ and
 $\pmb{x}$ is a characteristic vector of a feasible solution $X\in\Phi$.
 
Fix a parameter~$C>0$ and let $E(C)=\{e\in E: \, (\forall \pmb{\xi}\in \mathcal{U}) (c_e^{\pmb{\xi}}\leq C)\} \subseteq E$. 
Consider the following family of linear programs:
\begin{align}
\mathcal{LP}(C):&\sum_{e\in E} c_e^{\pmb{\xi}}x_e\leq C &  \forall \pmb{\xi}\in \mathcal{U},\label{sc}\\
                          & \pmb{x}\in \mathcal{X},& \label{hc}\\
                          &x_e \in [0,1] & \forall e\in E(C), \label{hc1} \\
			&x_e=0 & \forall e\notin E(C).  \label{hc2}
\end{align}
Minimizing $C$ subject to~(\ref{sc})-(\ref{hc2}) we obtain an \emph{LP relaxation} of (\ref{mipminmax0})-(\ref{mipminmax3}).
Let  $C^*$ denote  the smallest value of the parameter~$C$,
for which $\mathcal{LP}(C)$ is feasible and let $\pmb{x}^*$ be a feasible solution to~$\mathcal{LP}(C^*)$.
The value of $C^*$ is a lower bound on  $OPT$
and can be determined in polynomial time by using the following algorithm. Let $C_{\max}$ be an upper bound on $C^*$ (we can use, for example, $C_{\max}=\max_{\pmb{\zeta}\in \mathcal{U}}\sum_{e\in E} c^{\pmb{\zeta}}_e$). Using a binary search, we first determine in $O(T(K,|E|)\log C_{\max})$ time the smallest integer value $C'\in [0,C_{\max}]$ for which $\mathcal{LP}(C')$ is feasible,
where $T(K,|E|)$  is the time required to solve the linear program.
Then $C^*\in (C'-1,C']$ and $C^*$ can be computed by minimizing $C$ subject   to~(\ref{sc})-(\ref{hc2}) for the fixed set $E(C'-1)$. If this linear program is infeasible, then $C^*=C'$. The correctness of this method follows from the assumption that 
$\mathcal{U}\subset \Zset^{|E|}_{+}$.
 Our approximation algorithm, constructed in the next section, will be based on appropriately rounding the solution $\pmb{x}^*$.

Let us recall a special case of $\textsc{Min-Max}~\mathcal{P}$, which will be used later in this paper. 
Namely, 
 the \emph{min-max representatives selection} problem
  (\textsc{Min-Max RS} for short)~\cite{DW13,DK12,KKZ15}.
We are given a set $E=\{e_1,\dots, e_m\}$  of~$m$ tools and
  $E$ is partitioned into $p$ disjoint sets $E_1,\ldots,E_p$, where $|E_i|=r_i$ and $m=\sum_{i\in [p]}r_i$. 
  We will denote by $[p]$ the set $\{1,2,\ldots,p\}$.
  Define $\Phi=\{X\subseteq E\,:\,  (\forall i\in[p]) (|X\cap E_i|=1)\}$, so we wish to choose a subset $X\subseteq E$ of the tools that contains exactly one tool from each set $E_i$ to minimize the maximum cost over $\mathcal{U}$.
The set
  $\mathcal{X}$ in~(\ref{mipminmax2})  and~(\ref{hc}), for \textsc{Min-Max RS}, is then described by~$p$ constraints of the form
  \begin{equation}
  \sum_{e\in E_i} x_e=1\;\;\forall i\in[p]. \label{Xrs}
  \end{equation}
  It is worth pointing out that
the problem can be solved trivially for the case~$K=1$.
 Unfortunately, for unbounded scenario set $\mathcal{U}$,  \textsc{Min-Max RS} is strongly NP-hard and
 hard to approximate
 within $\Omega(\log^{1-\epsilon}K)$ for any $\epsilon>0$
unless NP\,$\subseteq \text{DTIME}(n^{\text{polylog} \,n})$~\cite{KKZ15}. In the next part of the paper, we will use the following result:
  \begin{thm}\cite{KKZ15}
\label{tsi}
 Let $\pmb{x}$ be a fractional feasible solution to the linear program $\mathcal{LP}(C)$ with~$\mathcal{X}$ of the form~(\ref{Xrs}) and 
 assume that $\max_{\pmb{\xi}\in \mathcal{U}}\sum_{e\in E} c_e^{\pmb{\xi}} x_e=C$. 
 Then there is an algorithm which transforms $\pmb{x}$, in $O(Km\log m)$ time, into a
 solution~$X$ for \textsc{Min-Max RS} that has cost $F(X,\pmb{\xi})=O(C \log K/\log\log K)$ for every 
 $\pmb{\xi}\in \mathcal{U}$.
\end{thm}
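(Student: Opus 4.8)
The plan is to round the fractional solution $\pmb{x}$ by (dependent) randomized rounding, bound each scenario's cost with a Chernoff inequality, and then derandomize by the method of conditional expectations. For \textsc{Min-Max RS}, constraint~(\ref{Xrs}) makes $(x_e)_{e\in E_i}$ a probability distribution on each block $E_i$, so I would select, independently over the blocks, exactly one representative from each $E_i$ with $\Pr[e\in X]=x_e$; this is a product distribution and automatically yields a feasible $X$. For \textsc{Min-Max SI}, constraint~(\ref{Xsi}) is a single cardinality requirement, so independent rounding would not preserve $|X|=p$; instead I would use a cardinality-preserving rounding on the uniform-matroid polytope (pipage or swap rounding) that keeps $|X|=p$ and makes the indicators $\mathbf{1}[e\in X]$ negatively correlated, still with $\Pr[e\in X]=x_e$. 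In either case $\mathbb{E}[F(X,\pmb{\xi})]=\sum_{e\in E}c_e^{\pmb{\xi}}x_e\le L$ for every scenario.

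\emph{Concentration.} Since $x_e=0$ for $e\notin E(L)$ and $c_e^{\pmb{\xi}}\le L$ for $e\in E(L)$, every element that can enter $X$ contributes a cost in $[0,L]$ to any scenario; after dividing by $L$, the quantity $F(X,\pmb{\xi})/L$ becomes a sum of independent (resp. negatively correlated) random variables in $[0,1]$ of mean at most $1$. The upper-tail Chernoff bound then gives $\Pr[F(X,\pmb{\xi})\ge tL]\le (e/t)^{t}$ for $t\ge 1$. Taking $t=c\log K/\log\log K$ one has $t\log t=\Theta(\log K)$, so this probability is at most $K^{-\Omega(c)}<1/(2K)$ for a suitable constant $c$. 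A union bound over the $K$ scenarios shows that with probability at least $1/2$ the rounded $X$ satisfies $F(X,\pmb{\xi})=O(L\log K/\log\log K)$ for all $\pmb{\xi}\in\mathcal{U}$ simultaneously, which proves existence.

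\emph{Derandomization and running time.} To make the algorithm deterministic I would replace each scenario's failure event by its exponential pessimistic estimator $\Phi_{\pmb{\xi}}=e^{-\lambda tL}\,\mathbb{E}[e^{\lambda F(X,\pmb{\xi})}]$, with $\lambda$ the optimizing parameter from the Chernoff step, and track $\Phi=\sum_{\pmb{\xi}\in\mathcal{U}}\Phi_{\pmb{\xi}}$, which upper-bounds the total failure probability and so is $<1$ initially. I then fix the rounding decisions sequentially, always choosing a move that does not increase $\Phi$: for RS this is the choice of representative in the current block (the estimator factorizes over blocks), and for SI it is a pipage step along an edge of the uniform-matroid polytope, where multilinearity of $\Phi$ in the fractional coordinates guarantees an endpoint that does not increase $\Phi$. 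When all coordinates are integral, $\Phi<1$ forces every indicator $\mathbf{1}[F(X,\pmb{\xi})\ge tL]$ to vanish, so the deterministic $X$ meets all bounds. Maintaining the $K$ estimators incrementally across the $O(n)$ decisions, with sorting used to locate the best move, yields the claimed $O(Kn\log n)$ running time.

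The main obstacle, I expect, is the \textsc{Min-Max SI} case: preserving $|X|=p$ exactly while retaining enough independence (negative correlation) for the Chernoff bound to stay valid, and then carrying this through the derandomization, where a naive coordinate-by-coordinate fixing would destroy the cardinality constraint. A secondary delicate point is pinning the exponent so that the threshold is $\Theta(\log K/\log\log K)$ rather than the weaker $\Theta(\log K)$; the improvement rests entirely on the superlinear growth $t\log t$ of the Chernoff rate function when the mean is $O(1)$.
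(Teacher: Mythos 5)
Theorem~\ref{tsi} is not proved in this paper at all: it is imported as a black box from the cited works of Deineko--Woeginger and Kasperski--Kurpisz--Zieli\'nski, so there is no in-paper proof to compare against. Your reconstruction --- per-block independent rounding for \textsc{Min-Max RS}, cardinality-preserving (swap/pipage) rounding with negative correlation for \textsc{Min-Max SI}, the Chernoff-Hoeffding upper tail at threshold $t=\Theta(\log K/\log\log K)$ where $t\log t=\Theta(\log K)$, a union bound over the $K$ scenarios, and derandomization by pessimistic estimators (the product-form estimator being multilinear, hence concave along swap directions, so an endpoint move never increases it) --- is sound and is essentially the standard argument behind the cited result. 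The only part you assert rather than establish is the specific $O(Kn\log n)$ running time; making the incremental maintenance of the $K$ estimators precise is routine but would need to be spelled out to match the statement exactly.
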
  
Theorem~\ref{tsi} immediately leads to the LP-based $O(\log K/\log\log K)$ approximation algorithm for \textsc{Min-Max RS}. It is enough to choose $C=C'$,  where~$C'$ is
the smallest value of the parameter~$C$,
for which  the linear program $\mathcal{LP}(C)$ with~(\ref{Xrs}) is feasible
and use the fact that $C'$ is a lower bound on~$OPT$.

\section{Min-max shortest path}
\label{sdsp}

Let $G=(V,E)$ be a directed graph with two specified nodes $s,t\in V$, where $|V|=n$ and $|E|=m$. The set of feasible solutions $\Phi$ contains all $s$-$t$ paths in $G$. Each scenario $\pmb{\xi}\in \mathcal{U}$ represents a realization of the arc costs. In this section, we construct an LP-based $O(\sqrt{n\log K/\log\log K})$-approximation algorithm for the min-max version of the shortest path problem when $|\mathcal{U}|=K$ is unbounded.
For this problem, the set $\mathcal{X}$ in~(\ref{mipminmax2})  and~(\ref{hc}) is described by the following 
\emph{flow constraints}
 (the mass
 balance constraints):
 \begin{align}
  \sum_{e\in \delta^{\mathrm{out}}(v)}x_e&=\sum_{e\in \delta^{\mathrm{in}}(v)}x_e&
   \forall v\in V\setminus \{s,t\}, \label{xsp1}\\
  \sum_{e\in \delta^{\mathrm{out}}(s)}x_e&=\sum_{e\in \delta^{\mathrm{in}}(t)}x_e=1,& 
  \label{xsp2}
 \end{align} 
where $\delta^{\mathrm{out}}(v)$ and $\delta^{\mathrm{in}}(v)$ are the sets of outgoing and incoming
arcs, respectively, from $v\in V$.

We now show that the linear relaxation  $\mathcal{LP}(C)$ with the  flow constraints~(\ref{xsp1})-(\ref{xsp2})
has an integrality gap of at least $\Omega(\sqrt{n})$,
even for graphs with a 
 very simple structure, namely,  for \emph{arc series-parallel graphs}.
An arc series-parallel graph (ASP) is
recursively defined as follows (see, e.g.,~\cite{VTL82}). A graph
consisting of two nodes joined by a single arc is ASP. If $G_1$
and $G_2$ are ASP, so are the graphs constructed by each of
the following two operations:
\begin{itemize}
\item \emph{parallel composition}: identify the
source~$s_1$ of $G_1$ with the source~$s_2$ of $G_2$ and the sink~$t_1$ of $G_1$
with the sink~$t_2$ of~$G_2$,
\item \emph{series composition}: identify the sink~$t_1$ of $G_1$ with the source~$s_2$ of~$G_2$.
\end{itemize}
\begin{prop}
The linear relaxation  $\mathcal{LP}(C)$ with the flow constraints~(\ref{xsp1})-(\ref{xsp2})
has an integrality gap of at least $\Omega(\sqrt{n})$, even for arc series-parallel graphs.
\label{gapsp}
\end{prop}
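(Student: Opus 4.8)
The plan is to exhibit an explicit family of instances on which the integral optimum exceeds the LP optimum by a factor of $\Omega(\sqrt{n})$. The guiding intuition is that $\mathcal{LP}(L)$ optimizes over $s$-$t$ flows rather than over single paths: it can split one unit of flow evenly among many parallel paths and thereby average away the cost of any individual scenario, whereas an integral solution must commit to a single path that some scenario penalizes heavily.

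Concretely, I would fix an integer $N$ and build a directed graph $G=(V,E)$ consisting of $N$ internally disjoint $s$-$t$ paths $P_1,\dots,P_N$, where each $P_i$ is a directed path of exactly $N$ arcs (hence $N-1$ internal nodes). Then $|V|=2+N(N-1)$, so $n=\Theta(N^2)$ and $N=\Theta(\sqrt{n})$. I would take $K=N$ scenarios with a diagonal cost structure: under scenario $\pmb{\xi}_i$ every arc of $P_i$ has cost $1$ and every other arc has cost $0$.

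First I would bound the integral optimum from below. Since $P_1,\dots,P_N$ are the only $s$-$t$ paths, any feasible integral solution coincides with some $P_j$, and under scenario $\pmb{\xi}_j$ its cost equals the number of arcs of $P_j$, namely $N$; all paths are symmetric, so $OPT=N$. Next I would bound $L^*$ from above by exhibiting a feasible fractional solution for $\mathcal{LP}(1)$: set $x_e=1/N$ on every arc. This is a valid $s$-$t$ flow, so the mass balance constraints~(\ref{xsp1})-(\ref{xsp2}) hold, and since no arc cost exceeds $1$ we have $E(1)=E$, so~(\ref{hc1})-(\ref{hc2}) are satisfied. Under each scenario the cost is $\sum_{e\in P_i} c_e^{\pmb{\xi}_i} x_e = N\cdot 1\cdot(1/N)=1$, so~(\ref{sc}) holds with $L=1$. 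Therefore $L^*\le 1$, and the integrality gap is at least $OPT/L^*\ge N=\Omega(\sqrt{n})$.

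The only real design decision — and the crux of the construction — is balancing the two parameters so that $n$ scales quadratically in the ``width'' $N$ while the scenario set forces every path to be expensive in some scenario: choosing the path length equal to the number of paths yields $n=\Theta(N^2)$ together with a gap of $N$. Once the instance is fixed, every verification step reduces to an immediate calculation, so I do not anticipate a genuine obstacle beyond settling on this construction.
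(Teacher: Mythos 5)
Your proof is correct. The family of $N$ internally disjoint $s$-$t$ paths of length $N$ with diagonal $0$--$1$ scenarios gives $OPT=N$, while $x_e\equiv 1/N$ satisfies (\ref{sc})--(\ref{hc2}) with $L=1$ (all costs lie in $\{0,1\}$, so $E(1)=E$), hence $L^*\leq 1$ and the gap is at least $N=\Theta(\sqrt{n})$. However, your route is genuinely different from the paper's. The paper starts from a fixed $7$-node gadget with $4$ scenarios and gap $2$, and then recursively substitutes the current instance into the four cost-carrying arcs of that gadget, combining scenario sets via Cartesian products; after $r$ rounds the gap is $k=2^{r+1}$ with $n=\Theta(k^2)$ but $K=4^{k-1}$ scenarios. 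So the paper's instances realize the gap $\Theta(\sqrt{n})$ with $n=\Theta(\log^2 K)$, i.e.\ the gap is simultaneously $\Theta(\log K)$, which parallels the known $\Omega(\log^{1-\epsilon}K)$ inapproximability bound and shows that a $\log K$-type dependence is already forced at the level of the LP relaxation. Your construction is far more elementary and uses only $K=\Theta(\sqrt{n})$ scenarios; in that regime the gap you obtain is $\Theta(K)$, which is the largest possible, since the averaging argument (the one giving the trivial $K$-approximation) shows the gap of this LP never exceeds $K$. Thus both arguments establish the proposition as stated: yours is simpler and tight against the trivial $K$-approximation in the few-scenarios regime, while the paper's recursive construction buys the additional statement about how the gap scales with $K$.
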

\begin{proof}
See~Appendix~\ref{dod}.
\end{proof}

\noindent\emph{Remark.} It is worth noting that $s$-$t$~paths in $G$ can be also expressed by the $s$-$t$~\emph{cut constraints}
(see, e.g.,~\cite{WS10}).
Let us recall that an \emph{$s$-$t$~cut} in $G$ is a partition of $V$ into the subsets~$S$ and $\overline{S}=V\setminus S$ such that $s\in S$ and $t\in \overline{S}$. 
  We  refer to an arc $(u,v)$ with $u\in S$ and $v\in \overline{S}$ as
a \emph{forward arc} of the $s$-$t$ cut.
  The \emph{cut-set} of an $s$-$t$~cut, denoted by $(S,\overline{S})$, is the set of forward arcs  of the $s$-$t$ cut,
  i.e. the set  $\{(u,v)\in E: u\in S, v\in \overline{S}\}$.
 The $s$-$t$~\emph{cut constraints} have the following form:
\begin{equation}
  \sum_{e\in (S,\overline{S})} x_e\geq 1
   \text{ for all $(S,\overline{S})$ in~$G$}. \label{ccxsp}\\
 \end{equation} 
If we use the $s$-$t$~cut constraints~(\ref{ccxsp}) for expressing  $s$-$t$~paths  
 in 
the linear relaxation~$\mathcal{LP}(C)$, instead of the flow ones~(\ref{xsp1})-(\ref{xsp2}), then an
analysis similar to that in the proof of Proposition~\ref{gapsp}  shows that an integrality gap also remains at least $\Omega(\sqrt{n})$, even for arc series-parallel graphs.

We are ready to present 
the approximation algorithm for \textsc{Min-Max~SP}. It
is shown in the form of Algorithm~\ref{algmmsp}. In the following, we will describe all its steps.
In Step~\ref{spwl1}, we compute a fractional solution $\pmb{x}^*$ to $\mathcal{LP}(C^*)$ with the constraints~(\ref{xsp1})-(\ref{xsp2}),
 which is an $s$-$t$ arc flow in~$G$.
In Steps~\ref{spwl2} and~\ref{spwl3}, we preprocess  graph~$G$.
Let us recall the \emph{flow decomposition theorem} (see, e.g.,~\cite{AMO93}).
\begin{thm}[Flow decomposition]
\label{tfd}
Every nonnegative $s$-$t$ arc flow
$\pmb{x}$ can be represented as 
at most $m$ directed cycles and $s$-$t$ paths with nonzero flows (although not necessarily uniquely).
The value of the flow from $s$ to $t$ is equal to the sum of the flows along the  $s$-$t$ paths.
\end{thm}
Theorem~\ref{tfd} shows that the $s$-$t$ arc flow~$\pmb{x}^*$ can be decomposed into directed cycles and $s$-$t$ paths with nonzero flows.
We first remove from~$G$ every arc~$e\in E$ with~$x^*_e=0$ (Step~\ref{spwl2}). The resulting graph with $x^*_e>0$ for each~$e\in E$, must be connected, i.e. it must have an $s$-$t$ path.
Furthermore, to reduce the problem size, one may perform \emph{series arc reductions} (the  inverse  operation
to the series composition) in~$G$,
i.e. the operations which replace two series arcs~$f,g\in E$ by the single arc~$w$ with the cost $c^{\pmb{\xi}}_w=c^{\pmb{\xi}}_{f}+c^{\pmb{\xi}}_{g}$ under 
 $\pmb{\xi}\in \mathcal{U}$ and $x^*_w=x^*_f$. 
Finally, we convert $\pmb{x}^*>\pmb{0}$ into a feasible solution~$\pmb{\hat{x}}$ to $\mathcal{LP}(C^*)$ by removing directed cycles, 
such that the resulting graph with~$\hat{x}_e>0$  for every arc~$e\in E$ is acyclic (Step~\ref{spwl3}). A procedure is described in
 the proof of  following lemma.
\begin{lem}
Let $\pmb{x}^*>\pmb{0}$ be a feasible   solution to  $\mathcal{LP}(C^*)$ with the flow constraints~(\ref{xsp1})-(\ref{xsp2}).
Then there exists a feasible   solution~$\pmb{\hat{x}}$ to  $\mathcal{LP}(C^*)$
with~(\ref{xsp1})-(\ref{xsp2})
 such that
the graph induced  by~$\pmb{\hat{x}}>\pmb{0}$ is acyclic.
\label{lpresp}
\end{lem}
\begin{proof}
Consider $\pmb{x}^*>\pmb{0}$ and the graph~$G$ induced  by~$\pmb{x}^*>\pmb{0}$.
Theorem~\ref{tfd} implies that we can decompose the $s$-$t$ arc flow~$\pmb{x}^*$  into
 nonzero flows along   $s$-$t$  paths and cycles.
If each arc cost under any scenario is positive,  then $G$ is also acyclic, and we are done.
For nonnegative arc costs, $G$ may have a cycle.  Consider such a cycle and denote it by~$\mathcal{W}$.
Let $\mathcal{U}^{=}$ be the subset of $\mathcal{U}$ such that 
constraints~(\ref{sc}) are tight for $\pmb{x}^*$ and $C^*$.
We claim that there is at least one scenario $\pmb{\xi'}\in \mathcal{U}^{=}$ such that
$\sum_{e\in \mathcal{W}}  c_e^{\pmb{\xi'}}=0$. 
Suppose, contrary to our claim, that $\sum_{e\in \mathcal{W}}  c_e^{\pmb{\xi}}>0$
for every  $\pmb{\xi}\in \mathcal{U}^{=}$. 
Since for each $e\in \mathcal{W}$, $x^*_{e}>0$,
we can decrease the flow~$\pmb{x}^*$ on cycle~$\mathcal{C}$ by $\Delta=\min\{x^*_e\; :\; e\in \mathcal{C}\}$
and, in consequence,  decrease  the cost of~$\pmb{x}^*$ under every $\pmb{\xi}\in \mathcal{U}^{=}$,
which contradicts the optimality of~$C^*$.
Accordingly, we can decrease the flow~$\pmb{x}^*$ on the cycle~$\mathcal{W}$ by $\Delta$ without affecting~$C^*$.
The resulting solution is still feasible to (\ref{hc})-(\ref{hc2}), its maximum cost over~$\mathcal{U}$ is equal to~$C^*$ and
at least one arc from~$\mathcal{W}$ has zero flow. Thus, the graph induced by this new solution does not contain
the  cycle~$\mathcal{W}$. Applying the above procedure to all cycles in~$G$ one can convert $\pmb{x}^*$ into
a feasible   solution~$\pmb{\hat{x}}$ to  $\mathcal{LP}(C^*)$
such that the  new graph~$\hat{G}$  with~$\hat{x}_e>0$  for every arc~$e$ is acyclic.
Now the $s$-$t$ arc flow~$\pmb{\hat{x}}>\pmb{0}$
can be decomposed only into
 nonzero flows along  $s$-$t$ paths. This completes the proof.
\end{proof}
 The feasible   solution~$\pmb{\hat{x}}$ from Lemma~\ref{lpresp} can be determined in $O(mn)$ time
  using the flow decomposition 
 algorithm (see, e.g.,~\cite{AMO93}).

From now on, we will assume that the graph $G$ is induced by~$\pmb{\hat{x}}>\pmb{0}$ is acyclic.
 Let us assign an
arc length~$l_e\in \{0,1\}$ to each arc of $G$.  If $l_e=0$, then the arc is called \emph{selected}; otherwise, $l_e=1$, it is called \emph{not selected}. Initially, each arc is not selected, so the length~$l_e=1$ for each $e\in E$ (Step~\ref{spwl4}). We will carefully mark some arcs as selected
by assigning them the zero arc length
 during the course of the algorithm. The selected arcs form connected components in $G$, where each connected component of~$G$ has 
 a total length equal to~0 with respect to $l_e$ (initially, the nodes of $G$ form $n$ connected components because no arc is selected).   
 The following lemma will be needed to construct
 a feasible solution to the relaxation
 (\ref{sc})-(\ref{hc2}) of \textsc{Min-Max RS} with $\mathcal{X}$ of the form~(\ref{Xrs}).
\begin{lem}
Let  $L_P$ be  the length of a shortest $s$-$t$~path~$P$ in the acyclic graph~$G$,
induced by~$\pmb{\hat{x}}>\pmb{0}$,
with respect to 
the arc lengths~$l_e \in \{0,1\}$, $e\in E$. If $L_P\geq 1$,
 then there are $L_P$ arc disjoint cut-sets: $(S_1,\overline{S}_1),\ldots, (S_{L_P},\overline{S}_{L_P})$ in $G$, 
 which do not contain any selected arc.
 Moreover, $\sum_{e\in (S_i,\overline{S}_i)}\hat{x}_e\geq 1$ for each $i\in [L_P]$.
 \label{lspexst}
\end{lem}
\begin{proof}
Let $d(v)$ be the shortest path distance from $s$ to $v$, $v\in V$, with respect to $l_e$, $e\in E$.
The shortest path distances satisfy an \emph{optimality condition}  (see, e.g.,~\cite{AMO93}), i.e.
$d(u)+l_e\geq d(v)$, for every~$e=(u,v)\in E$.
Of course, $d(s)=0$ and $d(t)=L_P$.  Using the shortest path distances~$d(v)$, $v\in V$, computed, one can determine $L_P$ cut-sets in the following way:
\begin{equation}
 (S_i,\overline{S}_i)=\{ (u,v)\in E \,:\,  d(u)=i-1, d(v)=i \}, \;\;i\in [L_P].\label{llcutset}
 \end{equation}
 We can see at once that all the $L_P$ cut-sets
 are nonempty, since $L_P\geq 1$, and by the definition
they are arc disjoint. 
If an arc~$e=(u,v)$ is such that $l_e=0$ ($e$ is  selected), then by the optimality condition we get
$d(u)\geq d(v)$ and, in consequence, $e\not\in (S_i,\overline{S}_i)$ for every $i\in [L_P]$.
Thus, these cut-sets do not contain any selected arc.

The value of the $s$-$t$ arc flow~$\pmb{\hat{x}}$ is equal to~1. By 
Theorem~\ref{tfd},  $\pmb{\hat{x}}$ can be decomposed only into $s$-$t$ paths with nonzero flows ($G$ is acyclic) and
 the sum of the flows along these  $s$-$t$ paths is equal to~1.
Consider any cut-set $(S_i,\overline{S}_i)$,  $i\in [L_P]$.
 Since each $s$-$t$ path contains at least one arc in $(S_i,\overline{S}_i)$, 
  we have $\sum_{e\in (S_i,\overline{S}_i)}\hat{x}_e\geq 1$.
\end{proof}
Notice that the cut-sets $(S_1,\overline{S}_1),\ldots, (S_{L_P},\overline{S}_{L_P})$, in Lemma~\ref{lspexst}, can be determined in $O(m)$ time.
An example with determined cut-sets is shown in Figure~\ref{fig1}.
\begin{figure}[ht]
\centering
\includegraphics[width=\textwidth]{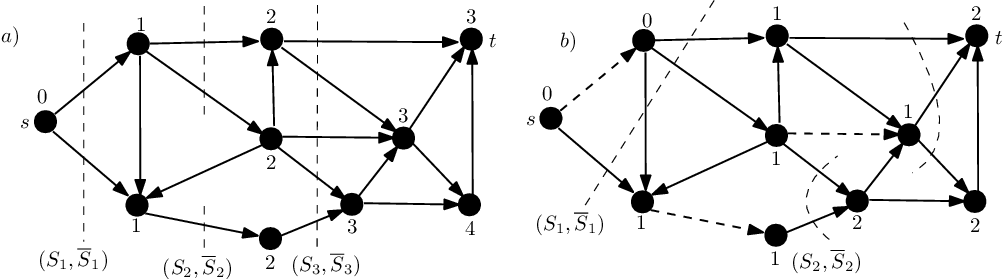}
\caption{(a) The cut-sets determined in  the first round: $(S_1,\overline{S}_1)$, $(S_2,\overline{S}_2)$, 
$(S_3,\overline{S}_3)$ and $L_P=3$,
and  (b) the cut-sets determined in the second round:  $(S_1,\overline{S}_1)$, $(S_2,\overline{S}_2)$ and $L_P=2$. The dashed arcs are selected and have $l_e=0$, and the solid arcs are not selected and have $l_e=1$. The numbers assigned to the nodes are the shortest path distances~$d(v)$, $v\in V$.}\label{fig1}
\end{figure}

The algorithm performs $\hat{k}$ rounds (see Steps~\ref{rstart}-\ref{rend}). In the $k$th round, $k\in [\hat{k}]$, we compute a shortest $s$-$t$~path $P$ with respect to the arc lengths~$l_e$, $e\in E$
(see Steps~\ref{spwl6} and~\ref{rend}). If $L_P$ is less than the prescribed value 
$\ell$ (see Step~\ref{spwl}), then we terminate and output $P$. Otherwise, we find $L_P$ arc disjoint cut-sets $(S_1,\overline{S}_1),\ldots, (S_{L_P},\overline{S}_{L_P})$, described in Lemma~\ref{lspexst} (see also~(\ref{llcutset}) and Figure~\ref{fig1}).  These cut-sets form an instance of the \textsc{Min-Max RS} problem with $K$ scenarios induced by the scenarios in $\mathcal{U}$.
 Since $\sum_{e\in (S_i,\overline{S}_i)}\hat{x}_e\geq 1$ for each $i\in [L_P]$,
 we have to perform the normalization in Step~\ref{norm}.
 As a result,   we get a feasible solution $\pmb{x'}$ to the relaxation
 (\ref{sc})-(\ref{hc2}) of \textsc{Min-Max RS} with $\mathcal{X}$ of the form~(\ref{Xrs}), $p=L_P$
 and $E_i=(S_i,\overline{S}_i)$, $i\in [L_P]$, i.e.
to the linear program $\mathcal{LP}(C')$, 
where $C'=\max_{\pmb{\xi}\in \mathcal{U}} \sum_{i\in[L_p]}\sum_{e\in E_i} c_e^{\pmb{\xi}} x'_e$.
We now use Theorem~\ref{tsi} to pick a set of arcs $X$, $|X|=L_P$, that contains exactly one arc from each $(S_i,\overline{S}_i)$
(see Step~\ref{spwl11}).  The following lemma describes the cost of $X$.
\begin{lem}
 The value of $\max_{\pmb{\xi}\in \mathcal{U}} \sum_{e\in X} c_e^{\pmb{\xi}}$ is $O(C^*\log K/\log\log K)$.
 \label{lspcost}
\end{lem} 
\begin{proof}
 The following inequalities:
\begin{align*}
C^*&=\max_{\pmb{\xi}\in \mathcal{U}} \sum_{e\in E} c_e^{\pmb{\xi}} \hat{x}_e \geq
 \max_{\pmb{\xi}\in \mathcal{U}} \sum_{i\in[L_P]}\sum_{e\in E_i} c_e^{\pmb{\xi}} \hat{x}_e\geq
 \max_{\pmb{\xi}\in \mathcal{U}} \sum_{i\in[L_P]}\sum_{e\in E_i} c_e^{\pmb{\xi}}
  \frac{\hat{x}_e}{ \displaystyle\sum_{e\in E_i}\hat{x}_e}\\
&= \max_{\pmb{\xi}\in \mathcal{U}} \sum_{i\in[L_P]}\sum_{e\in E_i} c_e^{\pmb{\xi}} x'_e=C'
\end{align*}
hold. The first inequality follows from $\bigcup_{i\in [L_P]} E_i\subseteq E$ and
the second one from the fact that $\sum_{e\in E_i}\hat{x}_e\geq 1$, $i\in [L_P]$ (see Lemma~\ref{lspexst}).
Since $\pmb{x'}$ is a feasible solution to  $\mathcal{LP}(C')$ with $\mathcal{X}$ of the form~(\ref{Xrs})
(with $p=L_P$),
Theorem~\ref{tsi} combined with $C'\leq C^*$  gives the assertion of the lemma.
\end{proof}
Note that each selected arc~$e\in X$  (every with~$l_e=1$ by~(\ref{llcutset}))   
  merges two connected components. Each connected component  
  consists of one node or nodes connected by already selected arcs with zero lengths.
  Hence, and from $|X|=L_P$, we get that the number of connected components is reduced by~$L_P$. Then,
  for each selected arc  $e\in X$, we   set $l_e=0$, marking it as selected (see Step~\ref{spwl12}).
  
  \begin{algorithm}
 \SetKwInOut{Input}{Input}
 \SetKwInOut{Output}{Output}
  \BlankLine
 \Input{A directed graph $G=(V,E)$, $|V|=n$, $|E|=m$, two specified nodes $s,t\in V$, a cost scenario set 
 $\mathcal{U}=\{\pmb{\xi}_1,\dots,\pmb{\xi}_K\}$, $\pmb{\xi}=(c_e^{\pmb{\xi}})_{e\in E}\in \mathcal{U}.$}
  \Output{An $s$-$t$ path~$P$ in~$G$ -- an approximate solution  for \textsc{Min-Max~SP}.}
  \BlankLine
  Find $C^*$ and a feasible solution $\pmb{x}^*=(x^*_e)_{e\in E}$ 
  to $\mathcal{LP}(C^*)$ with the flow  constraints~(\ref{xsp1})-(\ref{xsp2})\; \label{spwl1}
\lForEach{$e\in E$ with $x^*_e=0$}{$E\leftarrow E\setminus \{e\}$}  \label{spwl2}
 Convert $\pmb{x}^*$ into solution~$\pmb{\hat{x}}$, feasible to $\mathcal{LP}(C^*)$, i.e.
  perform series arc reductions and remove cycles in the graph~$G$ induced by~$\pmb{x}^*$
  (Lemma~\ref{lpresp})\; \label{spwl3}
\tcc{Now $G$ with $\hat{x}_e>0$ for every arc~$e$ is acyclic}
\lForEach(\tcp*[f]{Set arc lengths}){$e\in E$}{$l_e\leftarrow 1$} \label{spwl4}
$\ell \leftarrow\sqrt{n\log K/\log\log K}$ \tcp*{a prescribed value} \label{spwl}
Find a shortest $s$-$t$ path~$P$ in $G$ with arc lengths~$l_e$, $e\in E$;
let $L_P$ be the length of this path\; \label{spwl6}
 \While(\tcp*[f]{round $k$}){$L_P> \ell$}{
\label{rstart}
Find $L_P$ arc disjoint cut-sets $(S_i,\overline{S}_i)$ in $G$, $i\in [L_P]$ described in Lemma~\ref{lspexst}\;\label{spwl8}
\ForEach{$i\in [L_P]$}{
\lForEach{$e\in (S_i,\overline{S}_i)$}{$x'_e\leftarrow \hat{x}_e/(\sum_{e\in (S_i,\overline{S}_i)}\hat{x}_e)$} \label{norm}
}
Use Theorem~\ref{tsi} to transform $x'_e$, $e\in \bigcup_{i\in [L_P]} E_i$, where $E_i=(S_i,\overline{S}_i)$,
into a
 solution~$X$, $|X|=L_P$, for \textsc{Min-Max RS}\;  \label{spwl11}
 \lForEach(\tcp*[f]{$e$ is selected}){$e\in X$}{$l_e\leftarrow 0$}  \label{spwl12}
 Find a shortest $s$-$t$ path $P$ in $G$ with arc lengths~$l_e$, $e\in E$; let $L_P$ be the length of~$P$\;\label{rend}
} 
\Return{$P$}
  \caption{An approximation algorithm  for \textsc{Min-Max~SP}}
 \label{algmmsp}
\end{algorithm}

 We now prove the following lemma needed to analyze the performance of Algorithm~\ref{algmmsp}.
\begin{lem}
Let $\ell$ be a prescribed value in Algorithm~\ref{algmmsp} (Step~\ref{spwl}). 
Then Algorithm~\ref{algmmsp}   performs at most $n/\ell$ rounds (Steps~\ref{rstart}-\ref{rend})
and
returns an $s$-$t$~path~$P$ with the cost $\max_{\mathcal{\xi\in U}} \sum_{e \in P} c_e^{\pmb{\xi}}$
  at most $O\left(\frac{n\log K}{\ell\log\log K} +\ell\right)C^*$.
  \label{lspph}
\end{lem}
\begin{proof}
We will denote by~$\hat{k}$ the number of rounds performed by the algorithm (Steps~\ref{rstart}-\ref{rend}).
Let $\mathcal{C}^k$ be the number of connected components in $G$ (merged by selected arcs) and $L_P^{k}$ be the length of 
a shortest path from~$s$ to~$t$ in~$G$
  with respect to~$l_e\in\{0,1\}$, $e\in E$, at the beginning of the $k$th round, $k\in[\hat{k}]$.
  In the $k$th round, we choose $X^k$, $|X^k|=L_P^k$, and set $l_e=0$ for each $e\in X^k$, which reduces the number of connected components by $L_P^k$.
 In consequence, the equalities
  \begin{equation}
    \mathcal{C}^{k+1}=\mathcal{C}^k-L_P^k, \;\mathcal{C}^1=n,\; k\in  [\hat{k}],
    \label{lsppheq1}
  \end{equation}
  hold.
  Observe that $L^{\hat{k}+1}_P\leq \ell$,  $\ell<L^k_P$, $k\in  [\hat{k}]$ (see the condition~\ref{rstart}). Furthermore,  $L^{\hat{k}+1}_P<\mathcal{C}^{\hat{k}+1}$. Indeed, each connected component in the $\hat{k}$th round consists of the arcs with length $l_e=0$. A shortest path with respect to $l_e\in \{0,1\}$ connects at most~$\mathcal{C}^{\hat{k}+1}$ components using the arcs with length $l_e=1$. So, the length of this path $L^{\hat{k}+1}_P$ is at most~$\mathcal{C}^{\hat{k}+1}-1$, which implies $L^{\hat{k}+1}_P<\mathcal{C}^{\hat{k}+1}$.
  Hence and from (\ref{lsppheq1}), we obtain
  \[
   L^{\hat{k}+1}_P< \mathcal{C}^{\hat{k}+1}=\mathcal{C}^1-\sum_{k\in [\hat{k}]} L^k_P < n-\hat{k} \ell,
  \]
  which gives the following bound on the number of rounds performed:
  \[
  \hat{k}<(n- L^{\hat{k}+1}_P)/\ell\leq n/\ell.
\]

  Consider the cost of the path $P$ returned.
  Since  each arc~$e\in P$
  is such that $e\in E(C^*)$ and
  the number of not selected arcs on this path is at most $\ell$,
  the maximum total cost of not selected arcs on $P$ is at most $\ell C^*$.
  On the other hand, the total cost of all selected arcs is at most $O(\frac{n \log K}{\ell \log \log K} C^*)$, which results from applying 
  at most
  $n/\ell$ times Lemma~\ref{lspcost}. So, the maximum cost of $P$ over all scenarios in $\mathcal{U}$ is $O\left(\frac{n\log K}{\ell\log\log K} +\ell\right)C^*$.
\end{proof}
The best ratio in $O\left(\frac{n\log K}{\ell\log\log K} +\ell\right)$
can be achieved by choosing $\ell=\sqrt{n\log K/\log\log K}$, which is a prescribed bound on the number of
not selected arcs on $P$ in  Algorithm~\ref{algmmsp} (see Step~\ref{spwl}). Lemma~\ref{lspph}  
and $C^*\leq OPT$  yield  an approximation factor of~ $O(\sqrt{n\log K/\log\log K})$.

We have yet to show that the running time of Algorithm~\ref{algmmsp}  is polynomial. 
Let us first analyze Steps~\ref{spwl1}-\ref{spwl6}.
 The value of~$C^*$ and 
corresponding  feasible solution $\pmb{x}^*$ 
  to $\mathcal{LP}(C^*)$  can be found in  $O(T(K,m,n)\log C_{\max})$ time, where $C_{\max}=n\max_{\pmb{\xi}\in\mathcal{U}, e\in E} \{c_e^{\pmb{\xi}}\}$ and  $T(K,m,n)$  is the time required to solve the linear program with constraints $\mathcal{LP}(C)$ with~(\ref{hc}) of the form~(\ref{xsp1})-(\ref{xsp2}).
The time $T(K,m,n)$ is polynomial (see, e.g.,~\cite{JSWZ21}). 
Steps~\ref{spwl2} and \ref{spwl4} require $O(m)$ time.
Converting $\pmb{x}^*$ into solution~$\pmb{\hat{x}}$ in Step~\ref{spwl3}
can be done in $O(mn)$ time
 by using the flow decomposition algorithm (see, e.g.,~\cite{AMO93}).
 Since $G$ is acyclic, a shortest $s$-$t$ path~$P$ in $G$ (Step~\ref{spwl6} and also~\ref{rend})
 can be found in $O(m)$ time (see, e.g.,~\cite{AMO93}).
 Therefore, Steps~\ref{spwl1}-\ref{spwl6} require $O(mn+T(K,m,n)\log C_{\max})$ time.
 Consider the round~$k$ (Steps~\ref{rstart}-\ref{rend}).
 Steps~\ref{spwl8}, \ref{norm} and~\ref{rend} require~$O(m)$ time
 (in  Step~\ref{spwl8}, we use already computed shortest path distances).
 By  Theorem~\ref{tsi}, Step~\ref{spwl11} requires in $O(Km\log m)$ time.
 The number of rounds is bounded by $\frac{n}{\ell}=\frac{\sqrt{n}}{\sqrt{\log K/\log\log K}}$ (see Lemma~\ref{lspph}).
 Thus,  executing  the  while loop (Steps~\ref{rstart}-\ref{rend}) takes $O\left(\frac{Km\sqrt{n}\log m}{\sqrt{\log K/\log\log K}}\right)$ time.
 Hence, the total running time of Algorithm~\ref{algmmsp}  is polynomial.
 
 Accordingly, we have thus proved the main result of this paper.
\begin{thm}
There is an $O(\sqrt{n\log K/\log\log K})$-approximation algorithm for \textsc{Min-Max~SP}
in a general directed graph.
\end{thm}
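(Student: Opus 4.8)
The plan is to obtain the theorem as an immediate consequence of Lemma~\ref{lspph}, treating the prescribed length $\hat{l}$ as a free parameter and choosing it to balance the two terms in the cost bound. Lemma~\ref{lspph} guarantees that Algorithm~\ref{algmmsp} terminates after $\hat{k}<n/\hat{l}$ rounds and returns an $s$-$t$ path whose cost, under every scenario in $\mathcal{U}$, is at most
\[
 O\!\left(\frac{n\log K}{\hat{l}\log\log K}+\hat{l}\right)L^*.
\]
The first summand accounts for the selected arcs (there are fewer than $n/\hat{l}$ rounds, each contributing cost $O(L^*\log K/\log\log K)$ by Lemma~\ref{lspcost}), while the second summand, $\hat{l}L^*$, accounts for the at most $\hat{l}$ not-selected arcs on the returned path, each of cost at most $L^*$ since they lie in $E(L^*)$. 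First I would simply record this bound and then optimize over $\hat{l}$.

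The optimization is elementary. Writing $A=n\log K/\log\log K$, the quantity to minimize is $A/\hat{l}+\hat{l}$, which attains its minimum at $\hat{l}=\sqrt{A}$ with value $2\sqrt{A}$. Hence choosing the prescribed value $\hat{l}=\sqrt{n\log K/\log\log K}$ (as in Step~\ref{spwl} of Algorithm~\ref{algmmsp}) equalizes the two contributions and yields a returned path of cost $O(\sqrt{n\log K/\log\log K})\,L^*$ under every scenario, and therefore of maximum cost over $\mathcal{U}$ of the same order.

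It remains to convert this into an approximation guarantee and to confirm polynomial running time. Since $L^*$ is the smallest value of $L$ for which the LP relaxation $\mathcal{LP}(L)$ of~(\ref{mipminmax0})--(\ref{mipminmax3}) is feasible, it is a lower bound on $OPT$, so $L^*\le OPT$ and the returned path has maximum cost $O(\sqrt{n\log K/\log\log K})\,OPT$. For the running time, $L^*$ together with a feasible $\pmb{x}^*$ is computed in polynomial time by binary search on $L$; the preprocessing of Lemma~\ref{lpresp} (cycle removal and series arc reductions) and the construction of the cut-sets of Lemma~\ref{lspexst} are polynomial; and each round performs one shortest-path computation on the acyclic graph together with one application of Theorem~\ref{tsi} in $O(Kn\log n)$ time, over fewer than $n/\hat{l}<n$ rounds. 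Thus the whole procedure runs in polynomial time.

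The substantive work has already been carried out in the preceding lemmas, so I expect no genuine obstacle in the theorem itself; the only real decision left is the balancing choice of $\hat{l}$, which is forced by the form of the bound in Lemma~\ref{lspph}. The one point requiring care is to confirm that the two cost contributions in Lemma~\ref{lspph} are genuinely additive and that the minimizing $\hat{l}$ is an admissible value — both of which hold, since $1\le\hat{l}=\sqrt{n\log K/\log\log K}\le n$, so $\hat{l}$ is polynomially bounded and the round count $n/\hat{l}$ is well defined.
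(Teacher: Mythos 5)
Your proposal is correct and follows essentially the same route as the paper: it invokes Lemma~\ref{lspph}, balances the two terms by setting $\hat{l}=\sqrt{n\log K/\log\log K}$ exactly as in Step~\ref{spwl} of Algorithm~\ref{algmmsp}, and concludes via $L^*\leq OPT$. Your additional remarks on polynomial running time and admissibility of $\hat{l}$ are sound elaborations of what the paper leaves implicit.
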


\section{Conclusions}
\label{scon}

There is still an open question concerning
the \textsc{Min-Max~SP} problem.
For this problem, there exists  an $O(\sqrt{n\log K/\log\log K})$-approximation algorithm
($\widetilde{O}(\sqrt{n})$-approximation algorithm) for general graphs,
designed in this paper,
 $\Omega(\log^{1-\epsilon} K)$ lower bound  on the  approximability of  the problem
 for arc series-parallel graphs~\cite{KZ09},  and    $\Omega(\log n/\log\log n)$ lower bound on the  approximability for acyclic graphs~\cite{MOT2024}.  We have also shown that an integrality gap of the flow LP relaxation is very close to the approximation ratio of the constructed algorithm. This suggests that better approximation algorithms based on rounding the solutions to the flow LP relaxation may not be possible.
 Closing the gap between the positive and negative approximation results for \textsc{Min-Max SP} may require other techniques and is an interesting subject for further
research. The first step has been taken in~\cite{LXZ2024} and~\cite{MOT2024},
where randomized polylogarithmic approximation algorithms for this problem,  which run in
a quasi-polynomial time, have been constructed.

\subsubsection*{Acknowledgements}
The authors were supported by
 the National Science Centre, Poland, grant 2022/45/B/HS4/00355.


\appendix

\section{Appendix}
\label{dod}

\begin{proof}[Proof of Proposition~\ref{gapsp}]
Consider an instance of \textsc{Min-Max SP} presented in Figure~\ref{fgapsp}.
We call arcs with zero costs under every scenario \emph{dummy arcs} -- see the dashed ones.
\begin{figure}[ht]
	\centering
	\includegraphics[height=4.5cm]{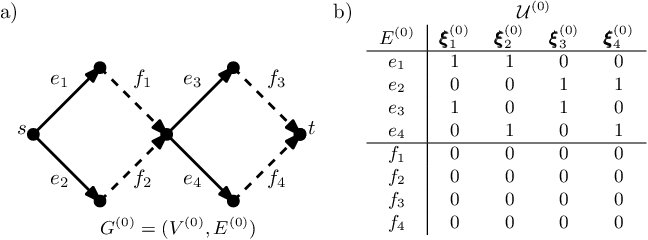}
	\caption{An instance of \textsc{Min-Max SP} with the integrality gap of at least~2.} \label{fgapsp}
\end{figure}
We see at once that $x^*_{e_i}=\frac{1}{2}$, $i\in[4]$, $x^*_{f_i}=\frac{1}{2}$, $i\in[4]$,
is a feasible solution to  $\mathcal{LP}(1)$ (here 
the smallest value of~$C$
for which $\mathcal{LP}(C)$ is feasible is equal to~1, i.e.
$C^*=1$) with the constraints~(\ref{xsp1})-(\ref{xsp2}) and
every integral solution for this instance has the maximum cost over $\mathcal{U}^{(0)}$ equal to~2.
Hence the integrality gap of the linear relaxation~$\mathcal{LP}(C)$ with~(\ref{xsp1})-(\ref{xsp2}), for this instance, is at least~2.
Clearly, graph~$G^{(0)}$  is arc series-parallel.

We now gradually increase the gap. A new instance of \textsc{Min-Max SP}, i.e. graph $G^{(1)}=(V^{(1)}, E^{(1)})$ with
a scenario set~$\mathcal{U}^{(1)}$, is build in the following way.
We replace every arc $e_i$,  $i\in[4]$, in $G^{(0)}$ (see  Figure~\ref{fgapsp}a) by the graph~$G^{(0)}$,
denoted by~$G^{(0)}_{e_i}$, 
obtaining $G^{(1)}$, where
 $|E^{(1)}|=4|E^{(0)}| +4$, $|V^{(1)}|=4(|V^{(0)}|-2) +7$.
 The graph~$G^{(1)}$  is arc series-parallel.
 Then we construct scenario set~$\mathcal{U}^{(1)}$ as follows.
We replace two values of~1  in every scenario $\pmb{\xi}^{(0)}\in \mathcal{U}^{(0)}$ by 
 two matrices $\mathbf{\Xi}^{(1)}_1$ and $\mathbf{\Xi}^{(1)}_2$
of the size $|E^{(0)}|\times |\mathcal{U}^{(0)}|^2$, respectively,
where the columns of the matrices form the Cartesian product  $\mathcal{U}^{(0)}\times \mathcal{U}^{(0)}$, i.e.
\[
\left(
\renewcommand{\arraycolsep}{2pt}
\begin{array}{c}
\mathbf{\Xi}^{(1)}_1\\
\mathbf{\Xi}^{(1)}_2
\end{array}
\right)
 = 
\left(
    \renewcommand{\arraycolsep}{2pt}			
\begin{array}{ccccccccccccc}
	\pmb{\xi}^{(0)}_1 &\pmb{\xi}^{(0)}_1&\ldots &\pmb{\xi}^{(0)}_1  &\pmb{\xi}^{(0)}_2 &\pmb{\xi}^{(0)}_2&\ldots &\pmb{\xi}^{(0)}_2&\ldots    &\pmb{\xi}^{(0)}_{|\mathcal{U}^{(0)}|}&\pmb{\xi}^{(0)}_{|\mathcal{U}^{(0)}|}&\ldots&\pmb{\xi}^{(0)}_{|\mathcal{U}^{(0)}|}\\
	\pmb{\xi}^{(0)}_1  & \pmb{\xi}^{(0)}_2& \ldots & \pmb{\xi}^{(0)}_{|\mathcal{U}^{(0)}|}&\pmb{\xi}^{(0)}_1 & \pmb{\xi}^{(0)}_2& \ldots & \pmb{\xi}^{(0)}_{|\mathcal{U}^{(0)}|}&\ldots&\pmb{\xi}^{(0)}_1&\pmb{\xi}^{(0)}_2&\ldots&\pmb{\xi}^{(0)}_{|\mathcal{U}^{(0)}|}
\end{array}
\right).
\]
 Every value of~0 in $\pmb{\xi}^{(0)}$  that corresponds to arc~$e_i$, $i\in[4]$, 
 is replaced by zero matrix $\mathbf{O}^{(1)}$ of the size  $|E^{(0)}|\times |\mathcal{U}^{(0)}|^2$ and
 every value of~0 in $\pmb{\xi}^{(0)}$  that corresponds to arc~$f_i$, $i\in[4]$, 
 is replaced by zero matrix $\pmb{0}^{(1)}$ of the size  $1\times |\mathcal{U}^{(0)}|^2$
 (see Figure~\ref{fgapsp}b).  Thus $|\mathcal{U}^{(1)}|= |\mathcal{U}^{(0)}|^2 |\mathcal{U}^{(0)}|$.
 The resulting instance is shown in  Figure~\ref{fgapsp1}.
 \begin{figure}[ht]
	\centering
	\includegraphics[height=5cm]{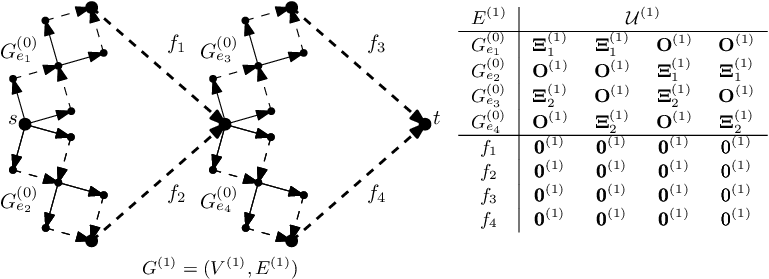}
	\caption{An instance of \textsc{Min-Max SP} with the integrality gap of at least~4} \label{fgapsp1}
\end{figure}
Note that every $s$-$t$ path in $G^{(1)}$ contains exactly four solid arcs. 
From the construction of $\mathcal{U}^{(1)}$, it follows
that there exists a scenario in which the costs of these four arcs are equal to~1. This is the maximum
cost, since each scenario $\pmb{\xi}^{(1)}\in \mathcal{U}^{(1)}$ has exactly four 1's.
Accordingly,  every integral solution for this instance has the maximum cost over $\mathcal{U}^{(1)}$ equal to~4.
Let $\pmb{x}^{*}\in (0,1]^{|E^{(1)}|}$ be given by $x^{*}_{e}=\frac{1}{4}$ for the solid arcs~$e$ in $G^{(1)}$;
$x^*_{f_i}=\frac{1}{2}$ for $i\in[4]$ and 
the components of $\pmb{x}^{*}$ corresponding to the dashed arcs in $G^{(0)}_{e_i}$, $i\in [4]$,
 are equal to~$\frac{1}{4}$. It is easy to check that $\pmb{x}^{*}$ is feasible to $\mathcal{LP}(1)$  ($C^*=1$).
 Therefore,  the integrality gap of the linear relaxation~$\mathcal{LP}(C)$ with~(\ref{xsp1})-(\ref{xsp2}), for this instance, is at least~4.
 
Repeatedly applying the above construction enables us to increase the integrality gap by at least~8.
That is,
we again replace each solid arc~$e_i$,  $i\in[4]$, in $G^{(0)}$ (see  Figure~\ref{fgapsp}a) by the graph~$G^{(1)}$.
This leads to the arc series-parallel graph~$G^{(2)}$ with
 $|E^{(2)}|=4|E^{(1)}| +4$, $|V^{(2)}|=4(|V^{(1)}|-2) +7$.
 Then we build  $\mathcal{U}^{(2)}$ as follows.
We replace two values of~1  in every scenario $\pmb{\xi}^{(0)}\in \mathcal{U}^{(0)}$ by 
 two matrices $\mathbf{\Xi}^{(2)}_1$ and $\mathbf{\Xi}^{(2)}_2$
of the size $|E^{(1)}|\times |\mathcal{U}^{(1)}|^2$, respectively,
where the columns of the matrices form the Cartesian product  $\mathcal{U}^{(1)}\times \mathcal{U}^{(1)}$:
\[
\left(
\renewcommand{\arraycolsep}{2pt}
\begin{array}{c}
\mathbf{\Xi}^{(2)}_1\\
\mathbf{\Xi}^{(2)}_2
\end{array}
\right)
 = 
\left(
    \renewcommand{\arraycolsep}{2pt}			
\begin{array}{ccccccccccccc}
	\pmb{\xi}^{(1)}_1 &\pmb{\xi}^{(1)}_1&\ldots &\pmb{\xi}^{(1)}_1  &\pmb{\xi}^{(1)}_2 &\pmb{\xi}^{(1)}_2&\ldots &\pmb{\xi}^{(1)}_2&\ldots    &\pmb{\xi}^{(1)}_{|\mathcal{U}^{(1)}|}&\pmb{\xi}^{(1)}_{|\mathcal{U}^{(1)}|}&\ldots&\pmb{\xi}^{(1)}_{|\mathcal{U}^{(1)}|}\\
	\pmb{\xi}^{(1)}_1  & \pmb{\xi}^{(1)}_2& \ldots & \pmb{\xi}^{(1)}_{|\mathcal{U}^{(1)}|}&\pmb{\xi}^{(1)}_1 & \pmb{\xi}^{(1)}_2& \ldots & \pmb{\xi}^{(1)}_{|\mathcal{U}^{(1)}|}&\ldots&\pmb{\xi}^{(1)}_1&\pmb{\xi}^{(1)}_2&\ldots&\pmb{\xi}^{(1)}_{|\mathcal{U}^{(1)}|}
\end{array}
\right), \; \pmb{\xi}^{(1)}\in \mathcal{U}^{(1)}.
\]
 Every~0 in $\pmb{\xi}^{(0)}$  that corresponds to arc~$e_i$, $i\in[4]$, 
 is replaced by matrix $\mathbf{O}^{(2)}$ of the size  $|E^{(1)}|\times |\mathcal{U}^{(1)}|^2$ and
 every~0 in $\pmb{\xi}^{(0)}$  that corresponds to arc~$f_i$, $i\in[4]$, 
 is replaced by matrix~$\pmb{0}^{(2)}$ of the size  $1\times |\mathcal{U}^{(1)}|^2$ 
 and so
 $|\mathcal{U}^{(2)}|= |\mathcal{U}^{(1)}|^2 |\mathcal{U}^{(0)}|$.
 
The proof of the proposition is based on repeating the above construction~$r$ times.
We get  the arc series-parallel graph~$G^{(r)}=(V^{(r)},E^{(r)})$, where 
 $|E^{(r)}|=4|E^{(r-1)}| +4$, $|V^{(r)}|=4(|V^{(r-1)}|-2) +7$, and 
 scenario set~$\mathcal{U}^{(r)}$ with the cardinality of $|\mathcal{U}^{(r-1)}|^2 |\mathcal{U}^{(0)}|$.
 Clearly, the resulting graph is arc series-parallel.
 Now, the integrality gap of the linear relaxation~$\mathcal{LP}(L)$ with~(\ref{xsp1})-(\ref{xsp2}) for the resulting
 instance is at least~$2^{r+1}$. Set $k=2^{r+1}$, $K=|\mathcal{U}^{(r)}|$, $m=|E^{(r)}|$ and $n=|V^{(r)}|$.
 Since $|\mathcal{U}^{(0)}|=4$, $|E^{(0)}|=8$ and $|V^{(0)}|=7$,
 a trivial verification yields   $K=4^{2^{r+1}-1}=4^{k-1}$, 
 $m=4^{r+1}+\sum_{p=1}^{r+1}4^p=\frac{7}{3}k^2-\frac{4}{3}$
 and $n=2+5\sum_{p=0}^{r} 4^p=2+\frac{5}{3}(k^2-1)$. 
 The latter equation shows that the linear relaxation~$\mathcal{LP}(C)$ with~(\ref{xsp1})-(\ref{xsp2})
has an integrality gap of at least $\Omega(\sqrt{n})$, which completes the proof.
\end{proof}

\end{document}